\documentclass[english, 12pt]{article}

\usepackage{amsmath,amsthm,amssymb}
\usepackage{mathtools}
\usepackage{babel}

\usepackage[a4paper,top=3cm, bottom=3cm, left=2.75cm, right=2.75cm]{geometry}%margin=3cm
\usepackage{fourier}
\usepackage[T1]{fontenc}
\usepackage[utf8]{inputenc}
\usepackage{microtype}

\usepackage{tikz}
\usetikzlibrary{arrows.meta,arrows}
\usepackage[round]{natbib}
\usepackage{authblk}
\usepackage[colorlinks=true,citecolor=blue,linkcolor=blue,urlcolor=blue]{hyperref}
\usepackage[font=small,labelfont=bf, width=.8\textwidth]{caption}

\theoremstyle{plain} % just in case the style had changed
\newtheorem{theorem}{Theorem}
\newtheorem{lemma}{Lemma}
\newcommand{\bem}[1]{\emph{\textbf{#1}\/}}
%\newcommand{\thistheoremname}{}
%\newtheorem*{genericthm*}{\thistheoremname}
%\newenvironment{namedthm*}[1]
%  {\renewcommand{\thistheoremname}{#1}%
%   \begin{genericthm*}}
%  {\end{genericthm*}}
\title{\large\textbf{Von Neumann's minimax theorem \\ through Fourier-Motzkin elimination}\thanks{I thank Sergiu Hart and Bernhard von Stengel for detailed comments.}}
\author{\textsc{\normalsize Mark Voorneveld}}
\affil{\small Department of Economics, Stockholm School of Economics, Box 6501, 113 83 Stockholm, Sweden, \url{mark.voorneveld@hhs.se}}

\date{\small \today}

\begin{document}

\maketitle

\begin{abstract}
Fourier-Motzkin elimination, a standard method for solving systems of linear inequalities, leads to an elementary, short, and self-contained proof of von Neumann’s minimax theorem.

%\medskip
%\noindent \textbf{Keywords.} Knaster-Kuratowski-Mazurkiewicz, Sperner, Brouwer.

%\noindent \textbf{JEL classification.} %62, C69.
\end{abstract}

%\newpage
\section{Introduction}

Fourier-Motzkin elimination (FME) is a standard method to solve systems of linear inequalities. \citet{Fourier1826} described it for a specific problem and \citet{Motzkin1936} rediscovered it in his PhD dissertation; \citet{Khachiyan2009} gives an overview. Analogous to its better-known sibling Gaussian elimination for systems of linear equations, an elimination step involves generating new inequalities, implied by the original ones, from which an unknown is `eliminated' because it gets a coefficient equal to zero. But in contrast, successive rounds of FME can lead to an exponential increase in the number of inequalities \citep[p.~156]{Schrijver1986}: FME is handy for proving theorems, but computationally inefficient.

FME has been used as a stepping stone to establish intermediate results --- like some of the numerous variants of Farkas’ lemma or the duality theorem in linear programming\footnote{An excellent overview of connections between the minimax theorem, variants of Farkas' lemma, and linear programming duality is \citet{vonStengel2024}.} --- which in their turn can be used to prove the minimax theorem of \citet{vonNeumann1928}. The main result in our paper (the proof of Theorem \ref{thm: FME to minimax} in Section~\ref{sec: minimax}) is that such intermediate steps are unnecessary: FME alone easily establishes the minimax theorem (Fig.~\ref{fig: FME to minimax}(a)).

Of course, one can also use FME to prove the minimax theorem \emph{with\/} a variant of Farkas' lemma as an intermediate step. \citet{Ville1938} gave the first Farkas-style proof of the minimax theorem. When \citeauthor{vNM1944} published their \textit{Theory of Games and Economic Behavior\/} \citeyearpar{vNM1944}, they included his proof instead of the much longer one by \citet{vonNeumann1928}; \citet[Ch.~II]{Owen1968} is an early textbook treatment. Proofs along these lines often rely on topology to show that the problems where players optimize their assured payoff achieve a solution (Fig.~\ref{fig: FME to minimax}(b)). For completeness, such a proof is included in Section \ref{subsec: Farkas and topology}; Section \ref{subsec: Farkas only} then does a topology-free proof using Farkas (Fig.~\ref{fig: FME to minimax}(c)). To start off, Section \ref{sec: FME} summarizes FME and uses it to prove one variant of Farkas' lemma.

\begin{figure}[ht]
\tikzstyle{box} = [rounded corners, draw, very thick, minimum height = 0.9cm, fill = gray!30, align=center]
\tikzstyle{arrow} = [line width=1pt,-{Stealth[length=3mm, width=2mm]}]
\begin{center}
\begin{tikzpicture}
\node at (-1.2,0) {(a):};
  \node (FME1) at (0,0) [box] {FME};
  \node (minimax1) at (10,0) [box] {minimax theorem};
  \draw [arrow] (FME1) -- (minimax1);
\end{tikzpicture}
\end{center}
\begin{center}
\begin{tikzpicture}
  \node at (-1.2,0) {(b):};
  \node (FME) at (0,0) [box] {FME};
  \node (Farkas) at (4.5,0) [box] {(variants of) \\ Farkas' lemma};
  \node (minimax) at (10,0) [box] {minimax theorem};
  \node (topology) at (10,-2) [box] {topology};
  \draw [arrow] (FME) -- (Farkas);
  \draw [arrow] (Farkas) -- (minimax);
  \draw [arrow] (topology) -- (minimax);
\end{tikzpicture}
\end{center}

\begin{center}
\begin{tikzpicture}
\node at (-1.2,0) {(c):};
  \node (FME) at (0,0) [box] {FME};
  \node (Farkas) at (4.5,0) [box] {(variants of) \\ Farkas' lemma};
  \node (minimax) at (10,0) [box] {minimax theorem};
  \draw [arrow] (FME) -- (Farkas);
  \draw [arrow] (Farkas) -- (minimax);
\end{tikzpicture}
\end{center}

\caption{Fourier-Motzkin elimination, Farkas' lemma, and the minimax theorem: three routes.}\label{fig: FME to minimax}
\end{figure}

\medskip
\noindent \textsc{notation:} $m$ and $n$ denote positive integers and $[n] = \{1, \ldots, n\}$. Vectors $\mathbf{0}$ and $\mathbf{1}$ have all components equal to zero and one, respectively; $e_i$ is the standard basis vector with $i$-th component equal to one and all others equal to zero. The identity matrix is $I$ and $O$ is a zero matrix. The dimension of such vectors and matrices depends on the context. Unless explicitly written otherwise, vectors are column vectors. We use superscript $\top$ for transposes. For vectors of equal dimension, $v \leq w$ means that $v_i \leq w_i$ for all components $i$. The number of elements in a finite set $S$ is $|S|$.

\section{From Fourier-Motzkin elimination to Farkas' lemma}\label{sec: FME}

Given matrix $A \in \mathbb{R}^{m \times n}$ and vector $b \in \mathbb{R}^m$, consider the system $Ax \leq b$ of $m$ linear inequalities in $n$ unknowns $x = (x_1, \ldots, x_n)$:
\[
a_{i1} x_1 + \cdots + a_{in} x_n \leq b_i \qquad \qquad i \in [m].
\]
\iffalse
\begin{alignat*}{4}
a_{11} x_1 & {}+{} & \cdots & {}+{} & a_{1n} x_n & {}\leq{} & b_1 \\
           &       & \vdots &       &            &          & \\
a_{m1} x_1 & {}+{} & \cdots & {}+{} & a_{mn} x_n & {}\leq{} & b_m
\end{alignat*}
\fi
Divide the $m$ inequalities into three types, depending on whether the unknown $x_1$ has a coefficient (G)reater than, (E)qual to, or (L)ess than zero:
\iffalse
\[
G = \bigl\{ i \in \{1, \ldots, m\}: a_{i1} > 0\bigr\}, \quad E = \bigl\{ i \in \{1, \ldots, m\}: a_{i1} = 0\bigr\}, \quad L = \bigl\{ i \in \{1, \ldots, m\}: a_{i1} < 0\bigr\}.
\]
\fi
\[
G = \bigl\{ i \in [m]: a_{i1} > 0\bigr\}, \qquad E = \bigl\{ i \in [m]: a_{i1} = 0\bigr\}, \qquad L = \bigl\{ i \in [m]: a_{i1} < 0\bigr\}.
\]
Inequalities $i \in G$ give \emph{upper\/} bounds on $x_1$. Multiply both sides by $1/a_{i1} > 0$ and rewrite:
\begin{equation}\label{eq: FMpos}
\text{for all } i \in G: \qquad x_1 \leq \frac{1}{a_{i1}} b_i - \frac{1}{a_{i1}} \left(a_{i2} x_2 + \cdots + a_{in} x_n \right).
\end{equation}
Inequalities $j \in L$ give \emph{lower\/} bounds on $x_1$. Multiply both sides by $-1/a_{j1} > 0$ and rewrite:
\begin{equation}\label{eq: FMneg}
\text{for all } j \in L: \qquad \frac{1}{a_{j1}} b_j - \frac{1}{a_{j1}} \left(a_{j2} x_2 + \cdots + a_{jn} x_n \right) \leq x_1.
\end{equation}
Inequalities $k \in E$ impose \emph{no\/} bounds on $x_1$, since its coefficient is zero:
\begin{equation}\label{eq: FMzero}
\text{for all } k \in E: \qquad 0 x_1 + a_{k2} x_2 + \cdots + a_{kn} x_n \leq b_k.
\end{equation}
So there is a solution $x = (x_1, \ldots, x_n)$ to linear inequalities $Ax \leq b$ if and only if $(x_2, \ldots, x_n)$, the vector of unknowns from which $x_1$ was eliminated, satisfies the $|E|$ linear inequalities in \eqref{eq: FMzero} and the $|G| \cdot |L|$ linear inequalities that say that each upper bound in \eqref{eq: FMpos} is indeed greater than or equal to each lower bound in \eqref{eq: FMneg}, leaving room to squeeze the remaining unknown $x_1$ in between: for all $i \in G$ and all $j \in L$,
\[ \frac{1}{a_{j1}} b_j - \frac{1}{a_{j1}} \left(a_{j2} x_2 + \cdots + a_{jn} x_n \right) \leq \frac{1}{a_{i1}} b_i - \frac{1}{a_{i1}} \left(a_{i2} x_2 + \cdots + a_{in} x_n \right),
\]
or, equivalently, after rearranging and noticing that the $x_1$-terms cancel out,
\begin{equation}\label{eq: FME nonneg}
\frac{1}{a_{i1}} \left(a_{i1} x_1 + \cdots + a_{in} x_n \right) - \frac{1}{a_{j1}} \left(a_{j1} x_1 + \cdots + a_{jn} x_n \right) \leq \frac{1}{a_{i1}} b_i - \frac{1}{a_{j1}} b_j.
\end{equation}
These inequalities are nonnegative combinations of those in the original system $Ax \leq b$, i.e., of the form $y^{\top} Ax \leq y^{\top} b$ for some nonnegative vector $y = (y_1, \ldots, y_m)$: for inequality $k \in E$ in \eqref{eq: FMzero}, $y = e_k$; for inequality \eqref{eq: FME nonneg} arising from $i \in G$ and $j \in L$, $y = \tfrac{1}{a_{i1}} e_i + \left(- \tfrac{1}{a_{j1}} \right) e_j$.

Let $Y_1$ be a matrix with these $y$'s as its rows. Then we can summarize the result of eliminating the unknown $x_1$ as follows. There is a nonnegative matrix $Y_1$ such that the $|E| + |G| \cdot |L|$ conditions in \eqref{eq: FMzero} and \eqref{eq: FME nonneg} can be written as $Y_1 A x \leq Y_1 b$. This system `eliminates' $x_1$ (coefficient matrix $Y_1 A$ has only zeroes in its first column) and therefore imposes conditions on $(x_2, \ldots, x_n)$ only.\footnote{If there are only upper bounds on $x_1$ (i.e., $|E| = |L| = 0$), we find $|E| + |G| \cdot |L| = 0$ restrictions on $(x_2, \ldots, x_n)$: each such vector can be augmented with an $x_1$ to find a solution of $Ax \leq b$ by choosing $x_1$ less than or equal to the smallest of these finitely many upper bounds. In that case, we can take $Y_1$ to be the zero matrix. A similar comment applies if there are only lower bounds on $x_1$.} Its solutions are those $(x_2, \ldots, x_n)$ for which we can find a suitable $x_1$ such that $x = (x_1, \ldots, x_n)$ solves the original system $Ax \leq b$.

Iterating to eliminate further unknowns, we find nonnegative matrices $Y_1, \ldots, Y_n$ such that after $k \in \{1, \ldots, n-1\}$ rounds of elimination the system of linear inequalities
\[
YA x \leq Yb \qquad \text{(with nonnegative matrix $Y = Y_k Y_{k-1} \cdots Y_1$)}
\]
eliminates $x_1, \ldots, x_k$ (coefficient matrix $YA$ has only zeroes in its first $k$ columns) and therefore imposes conditions on $x_{k+1}, \ldots, x_n$ only. Moreover, its solutions are
\[
\bigl\{(x_{k+1}, \ldots, x_n): \text{there exist } (x_1, \ldots, x_k) \text{ such that } x = (x_1, \ldots, x_n) \text{ solves } Ax \leq b\bigr\}.
\]
Multiplying an inequality by a positive scalar does not affect its solutions, so we may assume, whenever convenient, that the entries in a specific nonzero column of coefficient matrix $YA$ are either $-1$, $0$, or $1$.

Eliminating \emph{all\/} unknowns gives inequalities $YA x \leq Yb$ with nonnegative matrix $Y = Y_n Y_{n-1} \cdots Y_1$ such that \emph{all\/} entries of coefficient matrix $YA$ are zero: $YA$ is the zero matrix $O$. If $x$ solves the original system $Ax \leq b$, left multiplication with $Y$ gives $\mathbf{0} = O x = YA x \leq Yb$, i.e., $Yb \geq \mathbf{0}$. If, in contrast, $Yb$ has a negative coordinate $(Yb)_i < 0$, let $y^{\top}$ be the nonnegative $i$-th row of $Y$. Then $y^{\top} A = \mathbf{0}^{\top}$ (the $i$-th row of the zero matrix $YA$) and $y^{\top} b = (Yb)_i < 0$. Hence, there is an $x$ solving $Ax \leq b$ or a $y$ solving $y^{\top} A = \mathbf{0}^{\top}$, $y^{\top} b < 0$, $y \geq \mathbf{0}$. We can't have both: such $x$ and $y$ would imply
\[
0 = \mathbf{0}^{\top} x = (y^{\top} A) x = y^{\top} (Ax) \leq y^{\top} b < 0,
\]
which is impossible. This proves one of the many variants of Farkas' lemma (Theorem 1 in \citet{Fan1957}, Theorem 2.7 in \citet{Gale1960}):

\begin{theorem}[Farkas' lemma]\label{thm: Farkas FME}
Given a matrix $A \in \mathbb{R}^{m \times n}$ and a vector $b \in \mathbb{R}^m$, exactly one of the following two statements is true:
\begin{enumerate}
\item There is an $x \in \mathbb{R}^n$ solving $Ax \leq b$;
\item There is a $y \in \mathbb{R}^m$ solving $y^{\top} A = \mathbf{0}^{\top}$, $y^{\top} b < 0$, $y \geq \mathbf{0}$.
\end{enumerate}
\end{theorem}

\section{Von Neumann's minimax theorem}\label{sec: minimax}

A matrix game is defined by a matrix $A \in \mathbb{R}^{m \times n}$ where entry $a_{ij}$ is the payoff \emph{to\/} player 1 \emph{from\/} player 2 if the former chooses row $i \in [m]$ and the latter column $j \in [n]$. If player 1 assigns probabilities $p = (p_i)_{i \in [m]}$ to the $m$ rows ($p \geq \mathbf{0}$ and $\mathbf{1}^{\top} p = 1$) and player 2 assigns probabilities $q = (q_j)_{j \in [n]}$ to the $n$ columns ($q \geq \mathbf{0}$ and $\mathbf{1}^{\top} q = 1$), player 1's expected payoff is $p^{\top} A q$. Given $p$, this payoff is a convex combination of those in row vector $p^{\top} A$: player 1 is sure that the expected payoff is at least $v$ for any number $v$ with $p^{\top} A \geq v \mathbf{1}^{\top}$. Through a clever choice of $p$, the highest expected payoff this player can assure therefore solves
\begin{align}
& \underset{p,v}{\text{maximize }} v \quad \text{subject to} \quad p^{\top} A \geq v \mathbf{1}^{\top}, \enspace p \geq \mathbf{0}, \enspace \mathbf{1}^{\top} p= 1. \label{eq: pl 1}
\intertext{Similarly, player 2, who wants to pay as little as possible, can make sure that player 1's expected payoff is not more than the solution to}
& \underset{q,v}{\text{minimize }} v \quad \text{subject to} \quad A q \leq v \mathbf{1}, \enspace q \geq \mathbf{0}, \enspace \mathbf{1}^{\top} q = 1. \label{eq: pl 2}
\end{align}
\begin{theorem}[von Neumann's minimax theorem]
In each matrix game $A$, the players' optimization problems \eqref{eq: pl 1} and \eqref{eq: pl 2} have a solution and their optimal values are the same.
\end{theorem}
For a proof, it suffices to find $(p^*, q^*, v^*) \in \mathbb{R}^m \times \mathbb{R}^n \times \mathbb{R}$ such that $(p^*,v^*)$ is feasible in \eqref{eq: pl 1} and $(q^*,v^*)$ is feasible in \eqref{eq: pl 2}. Then $(p^*,v^*)$ solves \eqref{eq: pl 1} since each other feasible $(p,v)$ has
\[
v = v \left(\mathbf{1}^{\top} q^* \right) = (v \mathbf{1}^{\top}) q^* \leq (p^{\top} A) q^* = p^{\top} (A q^*) \leq p^{\top} (v^* \mathbf{1}) = v^*.
\]
Likewise, $(q^*,v^*)$ solves \eqref{eq: pl 2} and both problems have optimal value $v^*$. So we rephrase:

\begin{theorem}[Minimax rephrased]\label{thm: FME to minimax}
In each matrix game $A$, there are $(p^*, q^*, v^*) \in \mathbb{R}^m \times \mathbb{R}^n \times \mathbb{R}$ such that $(p^*, v^*)$ is feasible in \eqref{eq: pl 1} and $(q^*, v^*)$ is feasible in \eqref{eq: pl 2}.
\end{theorem}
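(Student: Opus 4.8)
The plan is to run FME directly on the player-2 system \eqref{eq: ineq matrix form}, eliminating the $n$ unknowns $q_1, \ldots, q_n$ but keeping $v$, and then to read off both $v^*$ and $p^*$ from the surviving inequalities. First I would record two easy facts about \eqref{eq: ineq matrix form}. It is feasible: take $q = e_1$ and any large $v$, so the projection onto the $v$-axis, namely $\{v : \exists\, q \text{ with } (q,v) \text{ satisfying } \eqref{eq: ineq matrix form}\}$, is nonempty. And it is bounded below in $v$: any feasible $(q,v)$ has $v \geq (Aq)_i \geq \min_{i,j} a_{ij}$ for every $i$. So this projection is a nonempty subset of $\mathbb{R}$ bounded below; it has an infimum $v^*$, and any witnessing $q^*$ at $v = v^*$ is feasible in \eqref{eq: pl 2}.

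Next I would invoke the crucial insight closing Section \ref{sec: FME}: after eliminating $q_1, \ldots, q_n$, the projection above is the solution set of finitely many inequalities, each of the form $y^{\top} \widehat{A} \begin{bmatrix} q \\ v \end{bmatrix} \leq y^{\top} \widehat{b}$ with $y \geq \mathbf{0}$ and with all coefficients on $q_1, \ldots, q_n$ equal to zero. Splitting $y$ into blocks $\pi \in \mathbb{R}^m_{\geq 0}$, $\mu \in \mathbb{R}^n_{\geq 0}$, $\alpha \geq 0$, $\beta \geq 0$ matching the four row blocks of $\widehat{A}$, a short computation gives that the coefficient vector on $q$ is $\pi^{\top} A - \mu^{\top} + (\alpha - \beta)\mathbf{1}^{\top}$, the coefficient on $v$ is $-\mathbf{1}^{\top}\pi$, and the right-hand side is $\alpha - \beta$. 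Vanishing of the $q$-coefficients means $\pi^{\top} A + (\alpha - \beta)\mathbf{1}^{\top} = \mu^{\top} \geq \mathbf{0}^{\top}$, i.e.\ $\pi^{\top} A \geq (\beta - \alpha)\mathbf{1}^{\top}$, and each surviving inequality reads $-(\mathbf{1}^{\top}\pi)\, v \leq \alpha - \beta$.

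The payoff is that the coefficient $-\mathbf{1}^{\top}\pi$ on $v$ is never positive, so every surviving inequality is either a constant inequality (when $\pi = \mathbf{0}$) or a \emph{lower} bound $v \geq (\beta - \alpha)/(\mathbf{1}^{\top}\pi)$ on $v$ (when $\mathbf{1}^{\top}\pi > 0$). Hence $v^*$ is the maximum of these finitely many lower bounds, attained at some inequality with $\mathbf{1}^{\top}\pi > 0$ and $v^* = (\beta - \alpha)/(\mathbf{1}^{\top}\pi)$. Setting $p^* = \pi / (\mathbf{1}^{\top}\pi)$ then yields $p^* \geq \mathbf{0}$ and $\mathbf{1}^{\top} p^* = 1$, while dividing $\pi^{\top} A \geq (\beta - \alpha)\mathbf{1}^{\top}$ by $\mathbf{1}^{\top}\pi > 0$ gives exactly $p^{*\top} A \geq v^* \mathbf{1}^{\top}$, so $(p^*, v^*)$ is feasible in \eqref{eq: pl 1}; together with $(q^*, v^*)$ from the first paragraph this is the desired triple.

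The main obstacle I anticipate is not the algebra of the elimination but pinning down the binding inequality correctly: I must confirm that boundedness below forces at least one genuine lower bound (so $v^*$ is finite and attained rather than $-\infty$), and that the binding inequality necessarily has $\mathbf{1}^{\top}\pi > 0$, since that strict positivity is exactly what lets me normalize $\pi$ into a probability vector $p^*$. The sign bookkeeping across the four row blocks of $\widehat{A}$, and checking that no surviving inequality is an upper bound on $v$, is where the argument must be handled with care.
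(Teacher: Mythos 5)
Your proof is correct and follows essentially the same route as the paper: eliminate $q_1,\ldots,q_n$ by FME, use feasibility and boundedness below of \eqref{eq: ineq matrix form} to identify the binding lower bound $v^*$ on $v$, and read $p^*$ off the nonnegative multiplier $y$ of that surviving inequality. The only difference is cosmetic --- the paper rescales the binding inequality so its $v$-coefficient is exactly $-1$, making $\mathbf{1}^{\top}p^*=1$ immediate, whereas you normalize $\pi$ by $\mathbf{1}^{\top}\pi$ at the end; your extra observation that every surviving $v$-coefficient is nonpositive is a pleasant refinement but not essential.
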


Thus, the minimax theorem is a task for FME: finding solutions to linear inequalities!

\begin{proof}[Proof of Theorem 3]
Write the constraints in \eqref{eq: pl 2} as linear inequalities $Aq - v \mathbf{1} \leq \mathbf{0}$, $- I q \leq \mathbf{0}$, $\mathbf{1}^{\top} q \leq 1$, and $(- \mathbf{1})^{\top} q \leq -1$. In matrix form:
\[
%\begin{equation}\label{eq: ineq matrix form}
\widehat{A} \begin{bmatrix} q \\ v \end{bmatrix} \leq \widehat{b} \qquad \text{ with } \qquad
\widehat{A} = \begin{bmatrix*}[r]
A\phantom{^{\top}} & - \mathbf{1} \\ -I\phantom{^{\top}} & \mathbf{0} \\ \mathbf{1}^{\top} & 0 \\ - \mathbf{1}^{\top} & 0
\end{bmatrix*},
\quad
\widehat{b} = \begin{bmatrix*}[r]
\mathbf{0} \\ \mathbf{0} \\ 1 \\ -1
\end{bmatrix*}.
%\end{equation}
\]
Since these inequalities have solutions and in each solution $(q,v)$, the value of $v$ is bounded from below by, e.g., $A$'s lowest payoff, the same is true if we eliminate $q_1, \ldots, q_n$ by FME. By construction (recall Section 2), this yields a system of linear inequalities with solutions
\begin{equation}\label{eq: set v}
\left\{v: \text{ there is a $q$ such that } \widehat{A} \begin{bsmallmatrix} q \\ v \end{bsmallmatrix} \leq \widehat{b}\right\},
\end{equation}
and which can be expressed in the form
\begin{equation}\label{eq: only v}
Y \widehat{A} \begin{bmatrix} q \\ v \end{bmatrix} \leq Y \widehat{b}
\end{equation}
for a nonnegative matrix $Y$ such that in the coefficient matrix $Y \widehat{A}$ all $q_i$ variables have coefficient 0 and those of $v$ lie in $\{-1,0,1\}$.\footnote{We won't use this, but we can exclude coefficient $1$ on $v$. Recall that in FME, positive coefficients yield upper bounds on its variable. But $v$ has no upper bound: if $(q,v)$ is feasible, then so is $(q,v')$ for all $v' > v$.} As $v$ is bounded from below and, in FME, lower bounds on a variable come from inequalities where it has a negative coefficient, \eqref{eq: only v} includes inequalities where $v$ has coefficient $-1$: inequalities of the form
\[
0 q_1 + \cdots + 0 q_n - v \leq - L \qquad \text{(equivalently, $v \geq L$)}
\]
where $L$ is some lower bound. Let $v^*$ be the largest of the finitely many lower bounds. So
\[
0 q_1 + \cdots + 0 q_n - v = [\mathbf{0}^{\top} \, -1] \begin{bmatrix} q \\ v \end{bmatrix} \leq - v^*
\]
is an inequality in \eqref{eq: only v}: there is a row $y^{\top}$ of $Y$ with $y^{\top} \widehat{A} = [\mathbf{0}^{\top} \, -1]$ and $y^{\top} \widehat{b} = -v^*$, i.e., a nonnegative vector $y = (p^*, s, \delta, \varepsilon) \in \mathbb{R}^m \times \mathbb{R}^n \times \mathbb{R} \times \mathbb{R}$ with
\[
(p^*)^{\top} A - s^{\top} I + (\delta - \varepsilon) \mathbf{1}^{\top} = \mathbf{0}^{\top}, \qquad (p^*)^{\top} (- \mathbf{1}) = -1, \qquad \delta - \varepsilon = - v^*.
\]
Consequently, $p^* \geq \mathbf{0}$, $\mathbf{1}^{\top} p^* = 1$, and by nonnegativity of $s$,
\[
(p^*)^{\top} A = s^{\top} I + (\varepsilon - \delta) \mathbf{1}^{\top} \geq \mathbf{0}^{\top} + v^* \mathbf{1}^{\top} = v^* \mathbf{1}^{\top},
\]
making $(p^*, v^*)$ feasible in \eqref{eq: pl 1}. Finally, $v^*$ lies in \eqref{eq: set v}: as the largest of the lower bounds on $v$, it is $v$'s smallest possible value. So $(q^*, v^*)$ feasible in \eqref{eq: pl 2} for some $q^*$.
\end{proof}

This proof uses no topological arguments, like compactness of strategy spaces or continuity of payoffs. And it applies equally well if payoffs and probabilities lie in other ordered fields than the reals, like the set of rational numbers: FME works in any ordered field. The same comment applies to our proof in Section \ref{subsec: Farkas only}. The duality theorem of linear programming can be proved in essentially the same way \citep[Thm.~2.29]{KippMartin1999}.

\section{Proofs of the minimax theorem using Farkas' lemma}\label{sec: Farkas}

Our main result is in the previous section: FME directly implies the minimax theorem. To complete the overview in Figure~\ref{fig: FME to minimax}, Section~\ref{subsec: Farkas and topology} gives a more traditional proof using the particular version of Farkas' lemma we derived from FME in Theorem \ref{thm: Farkas FME}, plus a little topology, while Section~\ref{subsec: Farkas only} shows that the latter appeal to topology can be dispensed with.

Say that player 1 can \bem{assure} (a payoff of at least) $v \in \mathbb{R}$ if there is a $p$ such that $(p,v)$ is feasible in \eqref{eq: pl 1} and that player 2 can \bem{assure} (to pay at most) $w \in \mathbb{R}$ if there is a $q$ such that $(q,w)$ is feasible in \eqref{eq: pl 2}. If both are true, i.e.,
\begin{equation}\label{eq: low high}
\text{if player 1 can assure $v \in \mathbb{R}$ and player 2 can assure $w \in \mathbb{R}$, then $v \leq w$.}
\end{equation}
Indeed, with corresponding $p$ and $q$ we find that
\[
v = v(\mathbf{1}^{\top} q) = (v \mathbf{1}^{\top}) q \leq (p^{\top} A) q = p^{\top} (Aq) \leq p^{\top} (w \mathbf{1}) = w (p^{\top} \mathbf{1}) = w.
\]

\subsection{A proof using Farkas and a topological argument}\label{subsec: Farkas and topology}

The first step relies on topology to argue that the players' optimization problems \eqref{eq: pl 1} and \eqref{eq: pl 2} have solutions. Conditions $p^{\top} A \geq v \mathbf{1}^{\top}$ in player 1's problem \eqref{eq: pl 1} say that $\min_{j \in [n]} p^{\top} A e_j \geq v$. Since we maximize $v$, the latter holds with equality in an optimum. So the problem reduces to maximizing continuous function $p \mapsto \min_{j \in [n]} p^{\top} A e_j$ over the compact set of probability vectors $p$: an optimum $(p^*, \underline{v})$ exists. Likewise, an optimum $(q^*, \overline{v})$ exists in \eqref{eq: pl 2}.

In the second step, Farkas makes its entrance. Since $(p^*,\underline{v})$ and $(q^*, \overline{v})$ are feasible in the players' optimization problems \eqref{eq: pl 1} and \eqref{eq: pl 2}, respectively, we know from \eqref{eq: low high} that $\underline{v} \leq \overline{v}$. It remains to show that they are equal. If they aren't, pick any $v$ with $\underline{v} < v < \overline{v}$. Neither player can assure $v$: by definition of the optimal values $\underline{v}$ and $\overline{v}$, player 1 cannot assure more than $\underline{v}$ and player 2 cannot assure less than $\overline{v}$. This contradicts:

\begin{lemma}
For each matrix game $A$ and scalar $v$, at least one player can assure $v$.
\end{lemma}
\begin{proof}
If player 2 can assure $v$, we are done. So assume this is not the case: there is no $q$ with $A q \leq v \mathbf{1}$, $q \geq \mathbf{0}$, and $\mathbf{1}^{\top} q = 1$. In matrix form, there is there is no solution $q$ to
\[
\begin{bmatrix*}[r]
A\phantom{^{\top}}\\ -I\phantom{^{\top}} \\ \mathbf{1}^{\top} \\ - \mathbf{1}^{\top}
\end{bmatrix*} q \leq \begin{bmatrix*}[r]
v \mathbf{1} \\ \mathbf{0} \\ 1 \\ -1
\end{bmatrix*}.
\]
By Farkas' lemma (Thm.~\ref{thm: Farkas FME}) there is a nonnegative vector $(p, s, \delta, \varepsilon) \in \mathbb{R}^m \times \mathbb{R}^n \times \mathbb{R} \times \mathbb{R}$ with
\begin{equation}\label{eq: first Farkas}
(i): \enskip p^{\top} A - s^{\top} I + (\delta - \varepsilon) \mathbf{1}^{\top} = \mathbf{0}^{\top} \qquad \text{and} \qquad
(ii): \enspace v (p^{\top} \mathbf{1}) + (\delta - \varepsilon) < 0.
\end{equation}
If $p = \mathbf{0}$, part $(i)$ and $s \geq \mathbf{0}$ give $(\delta - \varepsilon) \mathbf{1}^{\top} = s^{\top} I \geq \mathbf{0}^{\top}$, i.e., $\delta - \varepsilon \geq 0$. But part $(ii)$ says the opposite. So $p$ is nonnegative and $p \neq \mathbf{0}$. If we divide all entries of $(p, s, \delta, \varepsilon)$ by $\mathbf{1}^{\top} p > 0$, the nonnegativity condition and the (in)equalities in \eqref{eq: first Farkas} remain true, so we may assume without loss of generality that $p$'s coordinates sum to 1: $\mathbf{1}^{\top} p =1$. Plug this into \eqref{eq: first Farkas}:
\[
\varepsilon - \delta > v \qquad \text{and} \qquad p^{\top} A = s^{\top} I + (\varepsilon - \delta) \mathbf{1}^{\top} \geq \mathbf{0} + (\varepsilon - \delta) \mathbf{1}^{\top} \geq v \mathbf{1}^{\top}.
\]
We conclude that $p^{\top} A \geq v \mathbf{1}^{\top}$, $p \geq \mathbf{0}$, and $\mathbf{1}^{\top} p = 1$: player 1 can assure $v$ by playing $p$.
\end{proof}

\subsection{A proof using Farkas only}\label{subsec: Farkas only}

In matrix form, the desired $(p^*, q^*, v^*)$ in Theorem \ref{thm: FME to minimax} must solve
\[
\begin{bmatrix*}[r]
\phantom{-}O\phantom{^{\top}} & \phantom{-}A\phantom{^{\top}} & - \mathbf{1} \\
\phantom{-}O\phantom{^{\top}} & -I\phantom{^{\top}} & \mathbf{0} \\
\phantom{-}\mathbf{0}^{\top} & \phantom{-}\mathbf{1}^{\top} & 0 \\
\phantom{-}\mathbf{0}^{\top} & -\mathbf{1}^{\top} & 0 \\
-A^{\top} & \phantom{-}O\phantom{^{\top}} & \mathbf{1} \\
-I\phantom{^{\top}} & \phantom{-}O\phantom{^{\top}} & \mathbf{0} \\
\phantom{-}\mathbf{1}^{\top} & \phantom{-}\mathbf{0}^{\top} & 0 \\
-\mathbf{1}^{\top} & \phantom{-}\mathbf{0}^{\top} & 0
\end{bmatrix*}
\begin{bmatrix} p \\ q \\ v \end{bmatrix} \leq
\begin{bmatrix*}[r]
\mathbf{0} \\ \mathbf{0} \\ 1 \\ -1 \\ \mathbf{0} \\ \mathbf{0} \\ 1 \\ -1
\end{bmatrix*}.
\]
If no such solution exists, Farkas' lemma (Theorem \ref{thm: Farkas FME}) says there is a nonnegative vector
\begin{equation}\label{eq: long}
(x, s_x, \delta_x, \varepsilon_x, y, s_y, \delta_y, \varepsilon_y) \in \mathbb{R}^m \times \mathbb{R}^n \times \mathbb{R} \times \mathbb{R} \times \mathbb{R}^n \times \mathbb{R}^m \times \mathbb{R} \times \mathbb{R}
\end{equation}
such that
\begin{align}
- y^{\top} A^{\top} - s_y^{\top} I + (\delta_y - \varepsilon_y) \mathbf{1}^{\top} & = \mathbf{0}^{\top}, \label{eq: L1} \\
x^{\top} A - s_x^{\top} I + (\delta_x - \varepsilon_x) \mathbf{1}^{\top} & = \mathbf{0}^{\top}, \label{eq: L2} \\
- x^{\top} \mathbf{1} + y^{\top} \mathbf{1} & = 0, \label{eq: L3} \\
(\delta_x - \varepsilon_x) + (\delta_y - \varepsilon_y) & < 0 \label{eq: L4}
\end{align}
Recall that $x$ and $y$ are nonnegative. By equation \eqref{eq: L3}, if one of them is $\mathbf{0}$, then so is the other. Using nonnegativity of $s_x$ and $s_y$, substitution in \eqref{eq: L1} and \eqref{eq: L2} would then imply that $\delta_y - \varepsilon_y \geq 0$ and $\delta_x - \varepsilon_x \geq 0$, contradicting \eqref{eq: L4}. So $x \geq \mathbf{0}$ and $y \geq \mathbf{0}$ are distinct from $\mathbf{0}$. By \eqref{eq: L3}, the sum of their coordinates is the same, say $c > 0$. If we rescale the vector in \eqref{eq: long} by $1/c$, the nonnegativity conditions and all conditions \eqref{eq: L1} to \eqref{eq: L4} remain true. Thus, without loss of generality, we may assume that the coordinates of $x$ and $y$ both sum to one. Combined with \eqref{eq: L2} we conclude that
\begin{align*}
x^{\top} A & \geq (\varepsilon_x - \delta_x) \mathbf{1}^{\top}, & x & \geq \mathbf{0}, & \mathbf{1}^{\top} x & = 1,
\intertext{making $(x, \varepsilon_x - \delta_x)$ feasible in \eqref{eq: pl 1}. Likewise, with \eqref{eq: L1} and taking transposes,}
Ay & \leq (\delta_y - \varepsilon_y) \mathbf{1}, & y & \geq \mathbf{0}, & \mathbf{1}^{\top} y & = 1,
\end{align*}
making $(y, \delta_y - \varepsilon_y)$ feasible in \eqref{eq: pl 2}. By \eqref{eq: low high}: $\varepsilon_x - \delta_x \leq \delta_y - \varepsilon_y$, contradicting \eqref{eq: L4}. This contradiction proves the theorem.

\bibliographystyle{abbrvnat}
\bibliography{FMEminimax}

\end{document}